%
\documentclass[runningheads]{llncs}
\usepackage{graphicx}
%
\usepackage{hyperref}

\usepackage{subcaption}
\usepackage{wrapfig}
\captionsetup{compatibility=false}
\usepackage{amsmath, xspace} 
\usepackage{dsfont} 
\usepackage{todonotes} 
\usepackage{csquotes}
\usepackage{ltl}
\usepackage{mathtools}
\usepackage{thm-restate}
\usepackage{algorithm}
\usepackage{multicol}
\usepackage{multirow}
\usepackage[noend]{algpseudocode}
\usepackage{tikz}
\usetikzlibrary{hobby, backgrounds, arrows, automata, calc, positioning}
\usepackage{listings}

\algrenewcomment[1]{\hfill{\small \color{blue!70} \(//\) \emph{#1}}}
\algnewcommand{\LineComment}[1]{\Statex {\small \color{blue!70} \(\hspace{14pt} //\) \emph{#1}}}
\algnewcommand{\LineCommentIndent}[1]{\Statex {\small\textsf{} \color{blue!70} \(\hspace{27pt} //\) \emph{#1}}}

\newcommand{\set}[1]{\{#1\}}
\newcommand{\NN}{\mathbb N}
\newcommand{\prefix}{\preceq}
\newcommand{\pow}[1]{2^{#1}}
\newcommand{\cupdot}{\mathbin{\dot\cup}}


\newcommand{\model}{\protect{\ensuremath{\mathcal{M}}}\xspace}

\newcommand{\strat}[2]{(2^{#1})^* \rightarrow 2^{#2}}


\newcommand{\AP}{\mathit{AP}}
\newcommand{\tracevars}{\mathcal{V}}
\newcommand{\traceassign}{\Pi}

\newcommand{\T}{T}

\newcommand{\ldot}{\mathpunct{.}}


\newcommand{\Un}{\LTLuntil}
\newcommand{\X}{\LTLnext}
\newcommand{\Gl}{\LTLglobally}
\newcommand{\F}{\LTLeventually}


\newcommand{\game}{\mathcal G}

\pgfdeclarelayer{background}
\pgfsetlayers{background,main}


\begin{document}
\title{Runtime Enforcement of Hyperproperties\thanks{This work was partially supported by the German Research Foundation (DFG) as part of the Collaborative Research Center ``Foundations of Perspicuous Software Systems'' (TRR 248, 389792660) and by the European Research Council (ERC) Grant OSARES (No. 683300).}}
%
%
\author{Norine Coenen\inst{1} \and Bernd Finkbeiner\inst{1} \and Christopher Hahn\inst{1} \and Jana Hofmann\inst{1} \and Yannick Schillo\inst{2}}
\institute{CISPA Helmholtz Center for Information Security, Saarbr\"ucken, Germany,\\
	\email{\{norine.coenen, finkbeiner, christopher.hahn, jana.hofmann\}@cispa.de}
	\and Saarland University, Saarbr\"ucken, Germany,\\
	\email{s8yaschi@stud.uni-saarland.de}}
\authorrunning{Coenen, Finkbeiner, Hahn, Hofmann, Schillo}
%
%
\maketitle              
\begin{abstract}
An enforcement mechanism monitors a reactive system for undesired behavior at runtime and corrects the system's output in case it violates the given specification.
In this paper, we study the enforcement problem for \emph{hyperproperties}, i.e., properties that relate multiple computation traces to each other. We elaborate the notion of \emph{sound} and \emph{transparent} enforcement mechanisms for hyperproperties in two trace input models:
1) the parallel trace input model, where the number of traces is known a-priori and all traces are produced and processed in parallel and 2) the sequential trace input model, where traces are processed sequentially and no a-priori bound on the number of traces is known.
For both models, we study enforcement algorithms for specifications given as formulas in universally quantified HyperLTL, a temporal logic for hyperproperties.
For the parallel model, we describe an enforcement mechanism based on parity games.
For the sequential model, we show that enforcement is in general undecidable and present algorithms for reasonable simplifications of the problem (partial guarantees or the restriction to safety properties).
Furthermore, we report on experimental results of our prototype implementation for the parallel model.

\end{abstract}
%
%
\section{Introduction}
\label{sec:Intro}

Runtime enforcement combines the strengths of dynamic and static verification by monitoring the output of a running system and also \emph{correcting} it in case it violates a given specification.
Enforcement mechanisms thus provide formal guarantees for settings in which a system needs to be kept alive while also fulfilling critical properties.
Privacy policies, for example, cannot be ensured by shutting down the system to prevent a leakage: an attacker could gain information just from the fact that the execution stopped.

\begin{figure}[t]
	\begin{subfigure}[t]{0.45\textwidth}
		\centering
		\begin{tikzpicture}[->,>=stealth',shorten >=1pt,auto,node distance=2.1cm,semithick,scale=0.9, every node/.style={transform shape}]
		\tikzstyle{every state}=[draw, rectangle, align=center]
		\tikzstyle{every text node part}=[align=center]
		
		\node [state] (sys) at (2,0) {Reactive \\ System};
		\node [] 	  (dots) at (2,-0.7) {$\vdots$};
		\node [state] (sys2) at (2,-1.6) {Reactive \\ System};
		\node [state, minimum height=5em] (enf) at (4.2,-0.8) {Enforcer}; 
		\node		  (spec) at (4.2, 1.1) {Specification $S$};
		\node [anchor = west] (out) at (6,-0.8) {};
		\node [anchor = west, align=left] (name) at (1.7,-2.6) {System with\\ formal guarantees};
		\node [anchor = west, align=left] (name) at (-0.5,1.1) {Environment};
		
		\path[draw] (0.3,-2) to node[pos=0.2, align=left] {\footnotesize$\mathit{I}_n$} (1.1,-2);
		\path[draw] (0.3,0) to node[pos=0.2] {\footnotesize$\mathit{I}_1$} (1.1,0);
		\path[draw] (sys.east) to node[pos=0.1] {\footnotesize$\mathit{O}_1$} (enf);
		\path[draw] ($(enf.east)+(0,0.5)$) to node[pos=0.6] {\footnotesize $\mathit{O}_1'$} ($(out)+(0,0.5)$);
		\node[right =0.7em of enf, yshift=7pt] (dots2) {$\vdots$};
		\path[draw] ($(enf.east)+(0,-0.5)$) to node[pos=0.6] {\footnotesize $\mathit{O}_n'$} ($(out)+(0,-0.5)$);
		\path[draw] (spec) edge (enf);
		\path[draw] (sys2.east) to node[pos=0.8] {\footnotesize$\mathit{O}_n$} (enf);
		\draw [-,dashed] (1.1,0.7) rectangle (5.1,-3.1);
		
		\end{tikzpicture}
	\caption{Parallel model.}	
	\end{subfigure}
	\hfill
	\begin{subfigure}[t]{0.45\textwidth}
		\centering
		\begin{tikzpicture}[->,>=stealth',shorten >=1pt,auto,node distance=2.1cm,semithick,scale=0.9, every node/.style={transform shape}]
		\tikzstyle{every state}=[draw, rectangle, align=center]
		\tikzstyle{every text node part}=[align=center]
		
		\node [state] (sys) at (2,0) {Reactive \\ System};
		\node [state] (enf) at (4.2,0) {Enforcer}; 
		\node		  (spec) at (4.2, 1.2) {Specification $S$};
		\node [anchor = west] (out) at (5.8,0) {};
		\node [anchor = west, align=left] (name) at (1.7,-1.2) {System with\\ formal guarantees};
		\node [anchor = west, align=left] (name) at (-0.5,1.2) {Environment};
		
		\path[draw] (0.3,-0.5) to node[pos=0.2, align=left] {\footnotesize new \\trace?} (1.1,-0.5);
		\path[draw] (0.3,0.5) to node[pos=0.2] {\footnotesize$\mathit{I}$} (1.1,0.5)
		(sys) edge node[pos=0.5] {\footnotesize$\mathit{O}$} (enf)
		(enf) edge node[pos=0.6] {\footnotesize $\mathit{O}'$} (out);
		\path[draw] (spec) edge (enf);
		
		\draw [-,dashed] (1.1,0.8) rectangle (5.1,-1.7);
		
		\end{tikzpicture}
	\caption{Sequential model.}	
	\end{subfigure}
		
	\caption{Runtime enforcement for a reactive system. In case the input-output-relation would violate the specification $S$, the enforcer corrects the output.}	
	\label{fig:enforcement}
\end{figure}
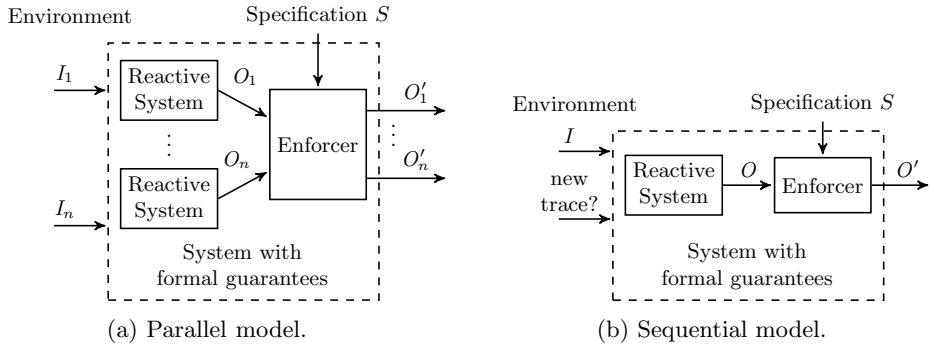

Runtime enforcement has been successfully applied in settings where specifications are given as \emph{trace properties}~\cite{Falcone2011,Tutorial}.
Not every system behavior, however, can be specified as a trace property. 
Many security and privacy policies are \emph{hyperproperties}~\cite{Hyperproperties}, which generalize trace properties by relating multiple execution traces to each other. 
Examples are noninterference~\cite{DBLP:conf/sp/Roscoe95,DBLP:journals/jcs/McLean92}, observational determinism~\cite{DBLP:conf/csfw/ZdancewicM03}, and the detection of out-of-the-ordinary values in multiple data streams~\cite{WearableTracker}.
Previous work on runtime enforcement of hyperproperties either abstractly studied the class of enforceable hyperproperties~\cite{BlackBox} or security policies~\cite{EnforceableSecurityPolicies}, or provided solutions for specific security policies like noninterference~\cite{EnforceableSecurityPolicies,SASI,EditAutomata}.
Our contribution is two-fold.

First, conceptually, we show that hyperproperty enforcement of reactive systems needs to solve challenging variants of the synthesis problem and that the concrete formulation depends on the given trace input model. We distinguish two input models 1) the parallel trace input model, where the number of traces is known a-priori and all traces are produced and processed in parallel and 2)~the sequential trace input model, where traces are processed sequentially and no a-priori bound on the number of traces is known.
Figure~\ref{fig:enforcement} depicts the general setting in these input models.
In the parallel trace input model, the enforcement mechanism observes $n$~traces at the same time. This is, for example, the natural model if a system runs in secure multi-execution~\cite{DBLP:conf/sp/DevrieseP10}.
In the sequential trace input model, system runs are observed in sessions, i.e., one at a time.
An additional input indicates that a new session (i.e., trace) starts.
Instances of this model naturally appear, for example, in web-based applications.

Second, algorithmically, we describe enforcement mechanisms for a concrete specification language.
The best-studied temporal logic for hyperproperties is HyperLTL~\cite{DBLP:conf/post/ClarksonFKMRS14}. 
It extends LTL with trace variables and explicit trace quantification to relate multiple computation traces to each other.
HyperLTL can express many standard information flow policies~\cite{DBLP:conf/post/ClarksonFKMRS14}.
In particular, it is flexible enough to state different application-tailored specifications.
We focus on universally quantified formulas, a fragment in which most of the enforceable hyperproperties naturally reside.
For both trace input models, we develop enforcement mechanisms based on parity game solving. For the sequential model, we show that the problem is undecidable in general but provide algorithms for the simpler case that the enforcer only guarantees a correct continuation for the rest of the current session. Furthermore, we describe an algorithm for the case that the specification describes a safety property.
Our algorithms monitor for \emph{losing} prefixes, i.e., so-far observed traces for which the system has no winning strategy against an adversarial environment.
We ensure that our enforcement mechansims are \emph{sound} by detecting losing prefixes at the earliest possible point in time. Furthermore, they are \emph{transparent}, i.e., non-losing prefixes are not altered.

We accompany our findings with a prototype implementation for the parallel model and conduct two experiments: 1) we enforce symmetry in mutual exclusion algorithms and 2) we enforce the information flow policy observational determinism. We will see that enforcing such complex HyperLTL specifications can scale to large traces once the initial parity game solving succeeds. 

\paragraph{Related Work.}
\label{sec:related}
HyperLTL has been studied extensively, for example, its expressiveness~\cite{Bozzelli,Hierarchy} as well as its 
verification~\cite{DBLP:conf/post/ClarksonFKMRS14,MCHyper,VerifyingHyperliveness,QuantitativeModelChecking}, 
synthesis~\cite{DBLP:conf/cav/FinkbeinerHLST18}, 
and monitoring problem~\cite{DBLP:conf/fm/StuckiSSB19,monitoring_hyperproperties_journal,RVHyper,constraint_based_monitoring,DBLP:conf/csfw/BonakdarpourF18}. 
Especially relevant is the work on realizability monitoring for LTL~\cite{monitoring_realizability} using parity games. 
Existing work on runtime enforcement includes algorithms for safety properties~\cite{ShieldSynthesis,EnforcingUnderBurstError}, 
real-time properties~\cite{TimedProperties,OnRETimedProperties}, 
concurrent software specifications~\cite{EnforceMOP},
and concrete security policies~\cite{EnforceableSecurityPolicies,SASI,EditAutomata}.
For a tutorial on variants of runtime enforcement see~\cite{Tutorial}. 
Close related work is~\cite{BlackBox}, which also studies the enforcement of general hyperproperties but independently of a concrete specification language (in contrast to our work).
Systems are also assumed to be reactive and black-box, but there is no distinction between different trace input models. 
While we employ parity game solving, their enforcement mechanism executes several copies of the system to obtain executions that are related by the specification.

\section{Preliminaries}
\label{sec:prelims}
Let $\Sigma$ be an alphabet of atomic propositions. We assume that $\Sigma$ can be partitioned into inputs and outputs, i.e., $\Sigma = I \cupdot O$. A finite sequence $t \in (\pow{\Sigma})^*$ is a \emph{finite trace}, an infinite sequence $t \in (\pow{\Sigma})^\omega$ is an \emph{infinite trace}. 
We write $t[i]$ for the $(i+1)$-th position of a trace, $t[0..i]$ for its prefix of length $i+1$, and $t[i, \infty]$ for the suffix from position~$i$.
A hyperproperty $H$ is a set of sets of infinite traces. 

\paragraph{HyperLTL.} 
HyperLTL~\cite{DBLP:conf/post/ClarksonFKMRS14} is a linear temporal hyperlogic that extends LTL~\cite{LTL} with prenex trace quantification. The syntax of HyperLTL is given with respect to an alphabet $\Sigma$ and a set $\tracevars$ of trace variables.
\begin{align*}
\varphi &\Coloneqq \forall \pi \ldot \varphi \mid \exists \pi \ldot \varphi \mid \psi \\
\psi &\Coloneqq a_\pi \mid \neg \psi \mid \psi \lor \psi \mid \X \psi \mid \psi \Un \psi 
\end{align*}
where $a \in \Sigma$ and $\pi \in \tracevars$. The atomic proposition $a$ is indexed with the trace variable $\pi$ it refers to. We assume that formulas contain no free trace variables. 
HyperLTL formulas are evaluated on a set $T \subseteq (\pow{\Sigma})^\omega$ of infinite traces and a trace assignment function $ \traceassign: \tracevars \rightarrow T$. 
We use $\traceassign[\pi \mapsto t]$ for the assignment that returns $\traceassign(\pi')$ for $\pi' \neq \pi$ and $t$ otherwise. 
Furthermore, let $\Pi[i, \infty]$ be defined as $\Pi[i, \infty](\pi) = \Pi(\pi)[i, \infty]$.
The semantics of HyperLTL is defined as follows: 
\begin{alignat*}{3}
& \T, \traceassign \models a_\pi &&\text{ iff } && a \in \traceassign(\pi)[0] \\
& \T, \traceassign \models \neg \psi &&\text{ iff } && \T, \traceassign \not\models \psi \\
& \T, \traceassign \models \psi_1 \lor \psi_2 &&\text{ iff } && \T, \traceassign \models \psi_1 \text{ or } \T, \traceassign \models \psi_2 \\
& \T, \traceassign \models \X \psi &&\text{ iff } && \T, \traceassign[1, \infty] \models \psi \\
& \T, \traceassign \models \psi_1 \Un \psi_2 &&\text{ iff } && \exists i \geq 0 \ldot \T, \traceassign[i, \infty] \models \psi_2 \text{ and } \forall 0 \leq j < i \ldot \T, \traceassign[j, \infty] \models \psi_1 \\
& \T, \traceassign \models \exists \pi \ldot \varphi &&\text{ iff } && \exists t \in \T \ldot \T, \traceassign[\pi \mapsto t] \models \varphi \\
& \T, \traceassign \models \forall \pi \ldot \varphi &&\text{ iff } && \forall t \in \T \ldot \T, \traceassign[\pi \mapsto t] \models \varphi 
\end{alignat*}
We also use the derived boolean connectives $\land, \rightarrow, \leftrightarrow$ as well as the derived temporal operators $\F \varphi \equiv \mathit{true} \Un \varphi$, $\Gl \varphi \equiv \neg (\F \neg \varphi)$, and $\varphi \LTLweakuntil \psi \equiv (\varphi \Un \psi) \lor \Gl \varphi$. 
A trace set $T$ satisfies a HyperLTL formula $\varphi$ if $T, \emptyset \models \varphi$, where $\emptyset$ denotes the empty trace assignment.

\paragraph{Parity Games.} A parity game $\game$ is a two-player game on a directed graph arena, where the states $V = V_0 \cupdot V_1$ are partitioned among the two players $P_0$ and $P_1$.
States belonging to $P_0$ and $P_1$ are required to alternate along every path.
States are labeled with a coloring function $c : V \rightarrow \NN$.
Player $P_0$ wins the game if they have a strategy to enforce that the highest color occurring infinitely often in a run starting in the initial state is even.
The winning region of a parity game is the set of states from which player $P_0$ has a winning strategy.
A given LTL formula $\varphi$ can be translated to a parity game $\game_\varphi$ in doubly-exponential time~\cite{DBLP:conf/tacas/EsparzaKRS17}. 
Formula~$\varphi$ is realizable iff player $P_0$ wins the game $\game_\varphi$. 
Its winning strategy $\sigma_0$ induces the reactive strategy $\sigma$ representing a system implementation that satisfies $\varphi$.

\section{Hyperproperty Enforcement}
\label{sec:hyper_enforcement}
In this section, we develop a formal definition of hyperproperty enforcement mechanisms for reactive systems modeled with the parallel and the sequential trace input model.
To this end, we first formally describe reactive systems under the two trace input models by the \emph{prefixes} they can produce. 
Next, we develop the two basic requirements on enforcement mechanisms, \emph{soundness} and \emph{transparency} \cite{Tutorial,DBLP:journals/sttt/FalconeFM12}, for our settings.
Soundness is traditionally formulated as
\begin{displayquote}
	the enforced system should be correct w.r.t. the specification.
\end{displayquote}
Transparency (also known as precision~\cite{BlackBox}) states that 
\begin{displayquote}
	the behavior of the system is modified in a minimal way, i.e., the longest correct prefix should be preserved by the enforcement mechanism.
\end{displayquote}
In the context of reactive systems, formal definitions for soundness and transparency need to be formulated in terms of strategies that describe how the enforcement mechanism reacts to the inputs from the environment and outputs produced by the system.
We therefore define soundness and transparency based on the notion of \emph{losing prefixes} (i.e. prefixes for that no winning strategy exists) inspired by work on monitoring reactive systems~\cite{monitoring_realizability}. 
We will see that the definition of losing prefixes depends heavily on the chosen trace input model.
Especially the sequential model defines an interesting new kind of synthesis problem, which varies significantly from the known HyperLTL synthesis problem.

As is common in the study of runtime techniques for reactive systems, we make the following reasonable assumptions.
First, reactive systems are treated as \emph{black boxes}, i.e., two reactive systems with the same observable input-output behavior are considered to be equal. Thus, enforcement mechanisms cannot base their decisions on implementation details.
Second, w.l.o.g. and to simplify presentation, we assume execution traces to have \emph{infinite length}. 
Finite traces can always be interpreted as infinite traces, e.g., by adding $\mathit{end}^\omega$. 
To reason about finite traces, on the other hand, definitions like the semantics of HyperLTL would need to accommodate many special cases like traces of different lengths.
Lastly, we assume that control stays with the enforcer after a violation occurred instead of only correcting the error and handing control back to the system afterwards. Since we aim to provide formal guarantees, these two problems are equivalent: if only the error was corrected, the enforcement mechanism would still need to ensure that the correction does not make the specification unrealizable in the future, i.e., it would need to provide a strategy how to react to all future inputs.

\subsection{Trace Input Models}
We distinguish two \emph{trace input models}~\cite{monitoring_hyperproperties_journal}, the parallel and the sequential model.
The trace input models describe how a reactive system is employed and how its traces are obtained (see Figure~\ref{fig:enforcement}). 
We formally define the input models by the prefixes they can produce. The definitions are visualized in Figure~\ref{fig:prefixes}.
In the parallel model, a fixed number of $n$ systems are executed in parallel, producing $n$ events at a time.
\begin{definition}[Prefix in the Parallel Model]
	An $n$-tuple of finite traces $ U = (u_1, \ldots u_n) \in ((\pow{\Sigma})^*)^n $ is a prefix of $V = (v_1, \ldots v_n) \in ((\pow{\Sigma})^{*/\omega})^n $ (written $U \prefix V$) in the parallel model with $n$ traces iff each $u_i$ is a prefix of $v_i$ (also denoted by $u_i \prefix v_i$). 
\end{definition} 
\begin{figure}[t]
	\begin{subfigure}[t]{0.48\textwidth}
		\centering
		\begin{tikzpicture}[-,shorten >=1pt,auto,line width=1.7pt]
			\draw[opacity=0.3] (0,0) to (2.5,0);
			\draw[opacity=0.3, densely dashed] (2.5,0) to (5,0);
			\node[opacity=0.3] at (5.3,0) {$\mathbf{\dots}$};
					
			\draw[opacity=0.3] (0,0.2) to (2.5,0.2);
			\draw[opacity=0.3, densely dashed] (2.5,0.2) to (5,0.2);
			\node[opacity=0.3] at (5.3,0.2) {$\mathbf{\dots}$};
			
			\draw[opacity=0.3] (0,0.4) to (2.5,0.4);
			\draw[opacity=0.3, densely dashed] (2.5,0.4) to (5,0.4);
			\node[opacity=0.3] at (5.3,0.4) {$\mathbf{\dots}$};
			
			\draw[opacity=0.3] (0,0.6) to (2.5,0.6);
			\draw[opacity=0.3, densely dashed] (2.5,0.6) to (5,0.6);
			\node[opacity=0.3] at (5.3,0.6) {$\mathbf{\dots}$};
			
			\node (U) at (1.25,0.3) {\large$\mathbf{U}$};
			\node (V) at (3.75,0.3) {\large$\mathbf{V}$};
			\node (Vphantom) at (3.5,0.1) {\phantom{\large$\mathbf{V}$}};
		\end{tikzpicture}
		\caption{Prefix in the parallel model.}	
	\end{subfigure}
	\hfill
	\begin{subfigure}[t]{0.48\textwidth}
		\centering
		\begin{tikzpicture}[-,shorten >=1pt,auto,line width=1.7pt]
			\draw[opacity=0.3, densely dashed] (0,0) to (5,0);
			\node[opacity=0.3] at (5.3,0) {$\mathbf{\dots}$};
			
			\draw[opacity=0.3] (0,0.2) to (2.5,0.2);
			\draw[opacity=0.3, densely dashed] (2.5,0.2) to (5,0.2);
			\node[opacity=0.3] at (5.3,0.2) {$\mathbf{\dots}$};
			
			\draw[opacity=0.3] (0,0.4) to (5,0.4);
			\node[opacity=0.3] at (5.3,0.4) {$\mathbf{\dots}$};			
			\draw[opacity=0.3] (0,0.6) to (5,0.6);
			\node[opacity=0.3] at (5.3,0.6) {$\mathbf{\dots}$};
			
			\node (U) at (1,0.4) {\large$\mathbf{U}$};
			\node (V) at (3.5,0.1) {\large$\mathbf{V}$};
		\end{tikzpicture}
		\caption{Prefix in the sequential model.}	
	\end{subfigure}
	
	\caption{Visualization of prefixes in trace input models.}	
	\label{fig:prefixes}
\end{figure}
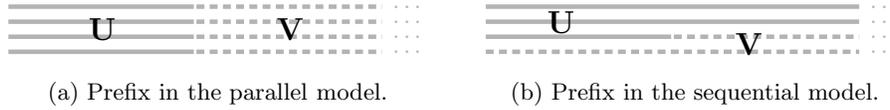
The prefix definition models the allowed executions of a system under the parallel trace input model: 
If the system produces $U$ and after a few more steps produces $V$, then $U \prefix V$. 
Note that the prefix definition is transitive: $U$ can be a prefix of another prefix (then the traces in $V$ are of finite length) or a prefix of infinite-length traces.

In the sequential model, the traces are produced one by one and there is no a-priori known bound on the number of traces.
\begin{definition}[Prefix in the Sequential Model]
	Let $U = (u_1, \ldots , u_n) \in ((\pow{\Sigma})^\omega)^*$ be a sequence of traces and $u \in (\pow{\Sigma})^*$ be a finite trace. Let furthermore $V = (v_1, \ldots, v_n, \ldots)$ be a (possibly infinite) sequence of traces with $v_i \in (\pow{\Sigma})^\omega$, and $v \in (\pow{\Sigma})^*$ be a finite trace.	
	We call $(U,u)$ a prefix of $(V,v)$ (written $(U,u) \prefix (V,v)$) iff
	either 1) $U = V$ and $u \prefix v$ or 2)~$V = u_1, \ldots, u_n, v_{n+1}, \ldots$ and $u \prefix v_{n+1}$.	
\end{definition}
We additionally say that $(U, u) \prefix V$ if $(U,u) \prefix (V, \epsilon)$, where $\epsilon$ is the empty trace. 
To continue an existing prefix $(U,u)$, the system either extends the started trace $u$ or finishes $u$ and continues with additional traces. Traces in $U$ are of infinite length and describe finished sessions. This means that they cannot be modified after the start of a new session.
Again, prefixes in this model are transitive and are also defined for infinite sets.

\remark
We defined prefixes tailored to the trace input models to precisely capture the influence of the models on the enforcement problem.
Usually, a set of traces $T$ is defined as prefix of a set of traces $T'$ if and only if $\forall t \in T.~ \exists t' \in T'.~ t \prefix t'$~\cite{Hyperproperties}.
A prefix in the sequential model, however, \emph{cannot} be captured by the traditional prefix definition, as it does not admit infinite traces in a prefix.

\subsection{Losing Prefixes for Hyperproperties}
\label{subsec:losing_prefixes}

\label{sec:InputModels}
Losing prefixes describe \emph{when} an enforcer has to intervene based on possible strategies for future inputs.
As we will see, the definition of losing prefixes, and thus the definition of the enforcement problem, differs significantly for both input models.
For the rest of this section, let $H$ denote an arbitrary hyperproperty.

We first define strategies for the parallel model with $n$ parallel sessions.
In the enforcement setting, a strategy receives a previously recorded prefix. 
Depending on that prefix, the enforcer's strategy might react differently to future inputs.
We therefore define a \emph{prefixed strategy} as a higher-order function $\sigma:((\pow{\Sigma})^*)^n \rightarrow ((\pow{I})^*)^n \rightarrow (\pow{O})^n$ over $\Sigma = I \cupdot O$. 
The strategy first receives a prefix (produced by the system), then a sequence of inputs on all $n$ traces, and reacts with an output for all traces.
We define a losing prefix as follows.


%
\begin{definition}[Losing Prefix in the Parallel Model]
	A strategy $\sigma(U)$ is losing for $H$ with $U = (u_1, \ldots , u_n) \in ((\pow{\Sigma})^*)^n$ if there are input sequences $(v_1, \ldots , v_n) \in ((\pow{I})^\omega)^n$ such that the following set is not in $H$:
	\begin{align*}
	\bigcup_{1 \leq i \leq n} \set{u_i \cdot (v_i[0] \cup \sigma_U(\epsilon)(i)) \cdot (v_i[1] \cup \sigma_U(v_i[0])(i)) \cdot (v_i[2] \cup \sigma_U(v_i[0]v_i[1])(i)) \ldots}, 
	\end{align*}
	where $\sigma_U = \sigma(U)$ and $\sigma_U(\cdot)(i)$ denotes the $i$-th output that $\sigma$ produces.
	\end{definition}
	We say that $\sigma(U)$ is winning if it is not losing. A prefix $U$ is winning if there is a strategy $\sigma$ such that $\sigma(U)$ is winning. Lastly, $\sigma$ is winning if $\sigma(\epsilon)$ is winning and for all non-empty winning prefixes $U$, $\sigma(U)$ is winning.

Similar to the parallel model, a prefixed strategy in the sequential model is a function $\sigma : ((\pow{\Sigma})^\omega)^* \times (\pow{\Sigma})^* \rightarrow (\pow{I})^* \rightarrow \pow{O}$ over $\Sigma = I \cupdot O$.
The definition of a losing prefix is the following.
\begin{definition}[Losing Prefix in the Sequential Model]
	\label{def:losing_prefix_unbounded_sequential_model}
	In the sequential model, a strategy $\sigma$ is losing with a prefix $(U, u)$ for $H$, if there are input sequences $V = (v_0, v_1, \ldots)$ with $v_i \in (\pow{I})^\omega$, such that the set $U \cup \set{t_0, t_1, \ldots}$ is not in $H$, where $t_0, t_1, \ldots$ are defined as follows.
	\begin{alignat*}{2}
	&t_0 \coloneqq ~ &&u \cdot (v_0[0] \cup \sigma(U, u)(\epsilon)) \cdot (v_0[1] \cup \sigma(U, u)(v_0[0])) \cdot \ldots \\
	& t_1 \coloneqq &&(v_1[0] \cup \sigma(U \cup \{t_0\}, \epsilon)(\epsilon)) \cdot (v_1[1] \cup \sigma(U \cup \{t_0\}, \epsilon)(v_1[0])) \cdot \ldots \\
	& t_2 \coloneqq &&(v_2[0] \cup \sigma(U \cup \{t_0, t_1\}, \epsilon)(\epsilon)) \cdot  (v_2[1] \cup \sigma(U \cup \{t_0, t_1\}, \epsilon)(v_2[0])) \cdot \ldots
	\end{alignat*}
	\end{definition}
	Winning prefixes and strategies are defined analogously to the parallel model.

\begin{remark}
	The above definitions illustrate that enforcing hyperproperties in the sequential model defines an intriguing but complex problem. Strategies react to inputs based on the observed prefix. The \emph{same} input sequence can therefore be answered differently in the first session and, say, in the third session. The enforcement problem thus not simply combines monitoring and synthesis but formulates a different kind of problem.
\end{remark}

\subsection{Enforcement Mechanisms}
With the definitions of the previous sections, we adapt the notions of sound and transparent enforcement mechanisms to hyperproperties under the two trace input models. 
We define an enforcement mechanism $\mathit{enf}$ for a hyperproperty $H$ to be a computable function which transforms a black-box reactive system $S$ with trace input model~\model into a reactive system $\mathit{enf}(S)$ with the same input~model.
\begin{definition}[Soundness]
	$\mathit{enf}$ is \emph{sound} if for all reactive systems $S$ and all input sequences in model \model, the set of traces produced by $\mathit{enf}(S)$ is in $H$.
\end{definition}

\begin{definition}[Transparency]
	$\mathit{enf}$ is \emph{transparent} if the following holds:
	Let $U$ be a prefix producible by $S$ with input sequence $s_I$.
	If $U$ is winning, then for any prefix $V$ producible by $\mathit{enf}(S)$ with input sequence $s_I'$ where $s_I \prefix s_I'$, it holds that $U \prefix V$.
\end{definition}

We now have everything in place to define when a hyperproperty is enforcable for a given input model.
\begin{definition}[Enforceable Hyperproperties]
	A hyperproperty $H$ is \emph{enforceable} if there is a sound and transparent enforcement mechanism.
\end{definition}

It is now straightforward to see that in order to obtain a sound and transparent enforcement mechanism, we need to construct a winning strategy for $H$.

\begin{restatable}{proposition}{enforcestrategy}
	\label{prop:enforce_strategy}
	Let $H$ be a hyperproperty and \model be an input model. Assume that it is decidable whether a prefix $U$ is losing in model \model for $H$. Then there exists a sound and transparent enforcement mechanism $\mathit{enf}$ for $H$ iff there exists a winning strategy in \model for $H$.
\end{restatable}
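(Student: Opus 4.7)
The plan is to prove both directions of the biconditional, with the right-to-left direction being the constructive one.

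For the ($\Leftarrow$) direction, I would use the winning strategy $\sigma$ as a fallback inside the enforcement mechanism. Given a system $S$, the enforcer $\mathit{enf}(S)$ maintains the currently produced prefix $U$, starting with the empty prefix, which is winning because $\sigma(\epsilon)$ is winning. At each step, after receiving the next environment input $i$ and the system's proposed output $o$, the enforcer uses the decidability assumption to test whether the extended prefix is still winning in model \model. If yes, it forwards $o$ and updates $U$ accordingly; otherwise, it outputs $\sigma(U)(i)$ instead. Transparency follows because the enforcer only deviates from $S$ when following it would leave the winning region, so any winning prefix producible by $S$ is preserved. Soundness follows by induction: the maintained prefix is always winning (in the deviation case, the extended prefix is winning because it can itself be continued by the tail of $\sigma(U)$), so the limit trace set lies in $H$.

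For the ($\Rightarrow$) direction, I would extract a winning strategy $\sigma$ from a given sound and transparent $\mathit{enf}$. For each prefix $U$, fix a canonical reactive system $S_U$ which produces $U$ on some input sequence $s_U$ (any prefix is trivially producible by the system that outputs exactly $U$ and then defaults), and define $\sigma(U)(v)$ to be the output produced by $\mathit{enf}(S_U)$ on input $s_U \cdot v$, restricted to the newly generated portion. By soundness, the trace set produced by $\mathit{enf}(S_U)$ is always in $H$ regardless of the input extension; by transparency, $U$ itself is preserved whenever $U$ is winning. Hence $\sigma(U)$ is winning for every winning $U$, and $\sigma$ is a winning strategy in \model.

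The main technical point lies in the model-specific instantiations of ``prefix'', ``production'', and ``update'' under \model. In the parallel model, $U$ is an $n$-tuple of finite traces updated componentwise at each tick. In the sequential model, $U$ consists of a list of completed sessions plus a current partial trace, and the enforcer must additionally react to the environment signal indicating the start of a new session. Verifying that the invariant ``current prefix is winning'' is preserved across such a session switch, by invoking $\sigma$ on the extended list of completed sessions together with the empty partial trace, is the main obstacle; this is precisely where Definition~\ref{def:losing_prefix_unbounded_sequential_model} of a losing prefix in the sequential model is used, together with the assumption that losingness is decidable in \model.
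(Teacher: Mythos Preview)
The paper does not include an explicit proof of this proposition in the provided text; it is introduced as ``straightforward'' and used to motivate the subsequent algorithms. Your two-direction plan, with the monitor-then-fallback construction for ($\Leftarrow$) and the strategy-extraction for ($\Rightarrow$), is the natural argument and matches how the paper implicitly reasons (and how the algorithms in Section~\ref{sec:algorithms} are built).

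There is, however, a real gap in your soundness step for the ($\Leftarrow$) direction. You write that ``the maintained prefix is always winning \ldots\ so the limit trace set lies in $H$.'' This inference does not hold in general: an $\omega$-chain of winning prefixes need not converge to a trace set in $H$. Take $n=1$ in the parallel model and $H$ given by $\forall \pi\ldot \LTLeventually o_\pi$ with $o \in O$. Every finite prefix is winning (one can always set $o$ at the next step), so your enforcer never overrides a system that perpetually withholds $o$; the resulting infinite trace violates $H$. In fact, for this $H$ a winning strategy exists while no sound and transparent enforcer does, so ($\Leftarrow$) as literally stated is false without a further hypothesis. The paper addresses exactly this point immediately after the proposition by restricting to properties ``whose counterexamples have losing prefixes''; under that assumption your invariant argument is valid, because a limit outside $H$ would force some finite stage to already be losing. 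You should make this assumption explicit and invoke it precisely at the passage from the finite-prefix invariant to membership of the limit in $H$.

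Your ($\Rightarrow$) direction is fine in outline: fixing a system $S_U$ that realises the winning prefix $U$ on its input projection and reading off the outputs of $\mathit{enf}(S_U)$ after $U$ gives, by transparency, a continuation that agrees with $U$, and by soundness, a continuation whose completion lies in $H$; this is exactly what is needed for $\sigma(U)$ to be winning.
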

The above proposition describes how to construct enforcement algorithms: 
We need to solve the synthesis problem posed by the respective trace input model.
However, we have to restrict ourselves to properties that can be monitored for losing prefixes.
This is only natural: for example, the property expressed by the HyperLTL formula $\exists \pi \ldot \LTLglobally a_\pi$ can in general not be enforced since it contains a hyperliveness~\cite{Hyperproperties} aspect: 
There is always the possibility for the required trace $\pi$ to occur in a future session (c.f. monitorable hyperproperties in~\cite{DBLP:conf/csfw/AgrawalB16,monitoring_hyperproperties_journal}).
We therefore describe algorithms for HyperLTL specifications from the universal fragment $\forall \pi_1 \ldot \ldots \forall \pi_k \ldot \varphi$ of HyperLTL. Additionally, we assume that the specification describes a property whose counterexamples have losing prefixes.

Before jumping to concrete algorithms, we describe two example scenarios of hyperproperty enforcement with different trace input models.

\begin{example}[Fairness in Contract Signing]
	Contract signing protocols let multiple parties negotiate a contract. 
	In this setting, fairness requires that in every situation where Bob can obtain Alice's signature, Alice  must also be able to obtain Bob's signature. 
	Due to the asymmetric nature of contract signing protocols (one party has to commit first), fairness is difficult to achieve~(see, e.g., \cite{DBLP:journals/jcs/NormanS06}). 
	Many protocols rely on a trusted third party (TTP) to guarantee fairness. 
	The TTP may negotiate
	multiple contracts in parallel sessions. 
	The natural trace input model is therefore the parallel model.
%
	Fairness forbids the existence of two traces $\pi$ and $\pi'$ that have the same prefix of inputs, followed in $\pi$ by Bob requesting ($R^\mathit{B}$) and receiving the signed contract ($S^B$), and in $\pi'$ by Alice requesting ($R^\mathit{A}$), but \emph{not} receiving the signed contract ($\neg S^A$): 
	\[
	\forall \pi \ldot \forall \pi' \ldot \neg ((\bigwedge_{i \in I} (i_\pi \leftrightarrow i_{\pi'})) ~\LTLuntil~ (R^\mathit{B}_\pi \wedge  R_{\pi'}^\mathit{A} \wedge \X (S_\pi^\mathit{B} \wedge \neg S_{\pi'}^\mathit{A})))
	\]

\end{example}

\begin{example}[Privacy in Fitness Trackers]	
	Wearables track a wide range of extremely private health data which can leak an astonishing amount of insight into your health. 
	For instance, it has been found that observing out-of-the-ordinary heart rate values correlates with diseases like the common cold or even Lyme disease~\cite{WearableTracker}. 
	Consider the following setting. 
	A fitness tracker continuously collects data that is stored locally on the user's device. 
	Additionally, the data is synced with an external cloud.
	While locally stored data should be left untouched, uploaded data has to be enforced to comply with information flow policies.
	Each day, a new stream of data is uploaded, hence the sequential trace input model would be appropriate.
	Comparing newer streams with older streams allows for the detection of anomalies. 
	We formalize an exemplary property of this scenario in HyperLTL. 
	Let $\mathit{HR}$ be the set of possible heart rates. Let furthermore $\mathit{active}$ denote whether the user is currently exercising. Then the following property ensures that unusually high heart rate values are not reported to the cloud:
	\[
	\forall \pi \ldot \forall \pi' \ldot \Gl ((\mathit{active}_\pi \leftrightarrow \mathit{active}_{\pi'}) \rightarrow \bigwedge_{r \in \mathit{HR}} (r_\pi \leftrightarrow r_{\pi'})) 
	\]

\end{example}

\section{Enforcement Algorithms for HyperLTL Specifications}
\label{sec:algorithms}

For both trace input models, we present sound and transparent enforcement algorithms for universal HyperLTL formulas defining hyperproperties with losing prefixes. 
First, we construct an algorithm for the parallel input model based on parity game solving.
%
For the sequential trace input model, we first show that the problem is undecidable in the general case.
Next, we provide an algorithm that only finishes the remainder of the current session. 
This simplifies the problem because the existence of a correct future session is not guaranteed.
For this setting, we then present a simpler algorithm that is restricted to safety specifications.

\subsection{Parallel Trace Input Model}
\label{sec:parallel-algorithms}
In short, we proceed as follows: 
First, since we know the number of traces, we can translate the HyperLTL formula to an equivalent LTL formula. 
For that formula, we construct a realizability monitor based on the LTL monitor described in~\cite{monitoring_realizability}.
The monitor is a parity game, which we use to detect minimal losing prefixes and to provide a valid continuation for the original HyperLTL formula.

Assume that the input model contains $n$ traces.
Let a HyperLTL formula $\forall \pi_1 \ldots \forall \pi_k \ldot \varphi$ over $\Sigma = I \cupdot O$ be given, where $\varphi$ is quantifier free. 
We construct an LTL formula $\varphi^n_\text{LTL}$ over $\Sigma' = \set{a_i ~|~ a \in \Sigma, 1 \leq i \leq n}$ as follows:
\begin{align*}
\varphi^n_\text{LTL} \coloneqq \bigwedge_{i_1, \ldots , i_k \in [1, n]} \varphi [\forall a \in \AP : a_{\pi_1} \mapsto a_{i_1}, \ldots , a_{\pi_k} \mapsto a_{i_k}]
\end{align*}
The formula $\varphi^n_\text{LTL}$ enumerates all possible combinations to choose $k$ traces -- one for each quantifier -- from the set of $n$ traces in the model.
We use the notation $\varphi [\forall a \in \AP : a_{\pi_1} \mapsto a_{i_1}, \ldots , a_{\pi_k} \mapsto a_{i_k}]$ to indicate that in $\varphi$, atomic propositions with trace variables are replaced by atomic propositions indexed with one of the $n$ traces. 
We define $I' = \set{a_i ~|~ a \in I, 1 \leq i \leq n}$ and $O'$ analogously. 
Since $n$ is known upfront, we only write $\varphi_\text{LTL}$.

Our algorithm exploits that for every LTL formula $\varphi$, there exists an equivalent parity game $\game_\varphi$ such that $\varphi$ is realizable iff player $P_0$ is winning in the initial state with strategy $\sigma_0$~\cite{DBLP:conf/tacas/EsparzaKRS17}. 
For a finite trace~$u$, $\varphi$~is realizable with prefix~$u$ iff the play induced by~$u$ ends in a state~$q$ that is in the winning region of player~$P_0$.
The algorithm to enforce the HyperLTL formula calls the following three procedures -- depicted in Algorithm~\ref{ParallelInfinite} -- in the appropriate order. 

\begin{algorithm}[t]
	\caption{HyperLTL enforcement algorithm for the parallel input model.}
	\label{ParallelInfinite}
	\begin{multicols}{2}
		\begin{algorithmic}[1]
			\Procedure{Initialize}{$\psi, n$}
			\State $\varphi_\text{LTL}$ := \Call{toLTL}{$\psi, n$};
			\State (game, $q_0$) := \Call{toParity}{$\varphi_\text{LTL}$}; 
			\State winR := \Call{SolveParity}{game};
			\If{$q_0 \notin$ winR}
			\State \textbf{raise} error;
			\EndIf
			\State \textbf{return}(game, winR, $q_0$);
			\EndProcedure	
			\Statex
			
			\Procedure{Enforce}{game, lastq}
			\State sig := \Call{GetStrat}{game, lastq}; 
			\While{true}
			\State o := sig(lastq);
			\State lastq := \Call{Move}{game, lastq, o};
			\State \textbf{output}(o);
			\State i := \Call{getNextInput}{ };
			\State i$_\text{LTL}$ := \Call{toLTL}{i};
			\State lastq := \mbox{\Call{Move}{game, lastq, i$_\text{LTL}$}};
			\EndWhile
			\EndProcedure
			
			\columnbreak
			
			\Procedure{Monitor}{game, winR, q} 
			\State lastq := q;
			\While{true}
			\State o := \Call{getNextOutput}{ };
			\State o$_\text{LTL}$ := \Call{toLTL}{o};
			\State q := \Call{Move}{game, lastq, o$_\text{LTL}$};
			\If{q $\notin$ winR}
			\State \textbf{return}(game, lastq);
			\EndIf
			\State i := \Call{getNextInput}{ };
			\State i$_\text{LTL}$ := \Call{toLTL}{i};
			\State q := \Call{Move}{game, q, i$_\text{LTL}$};
			\State lastq := q;
			\EndWhile	
			\EndProcedure	
		\end{algorithmic}
	\end{multicols}
	\vspace{-10pt}
\end{algorithm}

	\emph{Initialize:} Construct $\varphi_\text{LTL}$ and the induced parity game $\game_\varphi$. 
	Solve the game $\game_\varphi$, i.e. compute the winning region for player $P_0$. 
	If the initial state $q_0 \in V_0$ is losing, raise an {error}. 
	Otherwise start monitoring in the initial state.
	
	\emph{Monitor:} Assume the game is currently in state $q \in V_0$. 
	Get the next outputs $(o_1, \ldots, o_n) \in O^n$ produced by the $n$ traces of the system and translate them to $o_\text{LTL} \subseteq O'$ by subscripting them as described for formula $\varphi_\text{LTL}$. 
	Move with $o_\text{LTL}$ to the next state. 
	This state is in $V_1$. 
	Check if the reached state is still in the winning region. 
	If not, it is a losing state, so we do not approve the system's output but let the enforcer take over and call {\scshape Enforce} on the last state. 
	If the state is still in the winning region, we process the next inputs $(i_1, \ldots , i_n)$, translate them to $i_\text{LTL}$, and move with $i_\text{LTL}$ to the next state in the game, again in $V_0$. 
	While the game does not leave the winning region, the property is still realizable and the enforcer does not need to intervene. 
	
	\emph{Enforce:} By construction, we start with a state $q \in V_0$ that is in the winning region, i.e., there is a positional winning strategy $ \sigma : V_0 \rightarrow \pow{O'}$ for player $P_0$. 
	Using this strategy,~we output $\sigma (q)$ and continue with the next incoming input $i_\text{LTL}$ to the next state in $V_0$. 
	Continue with this strategy for any incoming input.

\emph{Correctness and Complexity.} By construction, since we never leave the winning region, the enforced system fulfills the specification and the enforcer is sound. 
It is also transparent: 
As long as the prefix produced by the system is not losing, the enforcer does not intervene. 
The algorithm has triple exponential complexity in the number of traces $n$: 
The size of~$\varphi_\text{LTL}$ is exponential in $n$ and constructing the parity game is doubly exponential in the size of~$\varphi_\text{LTL}$~\cite{DBLP:conf/tacas/EsparzaKRS17}. 
Solving the parity game only requires quasi-polynomial time~\cite{DBLP:conf/mfcs/Parys19}.
Note, however, that all of the above steps are part of the initialization.
At runtime, the algorithm only follows the game arena.
If the enforcer is only supposed to correct a single output and afterwards hand back control to the system, the algorithm could be easily adapted accordingly.

\subsection{Sequential Trace Input Model}
\label{sec:sequential-algorithms}
Deciding whether a prefix is losing in the sequential model is harder than in the parallel model.
In the sequential model, strategies are defined w.r.t. the traces seen so far -- they incrementally upgrade their knowledge with every new trace.
In general, the question whether there exists a sound and transparent enforcement mechanism for universal HyperLTL specifications is undecidable. 
%
\begin{theorem}
	\label{thm:pcp_reduction}
	In the sequential model, it is undecidable whether a HyperLTL formula $\varphi$ from the universal fragment is enforceable.
\end{theorem}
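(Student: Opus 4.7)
The plan is to reduce from Post's Correspondence Problem (PCP). Given a PCP instance with pairs $(u_1, v_1), \dots, (u_n, v_n)$ over a finite alphabet $\Gamma$, I construct a universal HyperLTL formula $\varphi_{PCP}$ whose enforceability in the sequential model is equivalent to the PCP instance having no solution. Since PCP is undecidable, this yields undecidability of sequential enforceability.

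First, I fix an alphabet $\Sigma = I \cupdot O$ where the inputs encode, at the start of each session, a role bit $r \in \{u, v\}$ together with a finite sequence of pair indices $i_1 \cdots i_m \in \{1,\dots,n\}^*$ (revealed symbol by symbol via the input stream), and the outputs carry symbols of $\Gamma$ plus a separator $\#$. The enforcer's task in a session of role $u$ (resp.\ $v$) with input index sequence $i_1 \cdots i_m$ is, morally, to produce $u_{i_1} u_{i_2} \cdots u_{i_m} \#^\omega$ (resp.\ $v_{i_1} v_{i_2} \cdots v_{i_m} \#^\omega$).

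Second, I design $\varphi_{PCP}$ as a conjunction of universal conjuncts enforcing: (1) \emph{well-formedness} of each individual trace (consistent role bit, symbols drawn from $\Gamma$, eventual $\#$-tail); (2) \emph{determinism across sessions}: any two traces $\pi,\pi'$ that read the same role and the same input index sequence must produce identical outputs, expressible as $\forall\pi\forall\pi'\ldot \Gl(\mathrm{same\_input}_{\pi,\pi'}) \rightarrow \Gl(\mathrm{same\_output}_{\pi,\pi'})$; (3) a \emph{conflict} constraint stating that there do not exist two traces $\pi,\pi'$ which (i) read the same index sequence, (ii) have opposite roles, and (iii) produce identical output strings. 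Clause (3) is precisely the negation of a PCP solution witness being jointly realizable as a $u$-session and a $v$-session.

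Third, the equivalence: if PCP has a solution $i_1,\dots,i_k$, then the environment can force the enforcer to lose, by first running a session with role $u$ on indices $i_1\cdots i_k$, and then a session with role $v$ on the same indices. Transparency and determinism (clause~2) leave no freedom for the enforcer on well-formed inputs in the first session; in the second session, clauses (1) and (2) still require emitting $v_{i_1}\cdots v_{i_k}$, which then violates clause (3). Conversely, if PCP has no solution, the deterministic strategy that literally outputs $u_{i_1}\cdots u_{i_m}\#^\omega$ or $v_{i_1}\cdots v_{i_m}\#^\omega$ according to the input satisfies all three clauses on every prefix, so is sound and transparent. By Proposition~\ref{prop:enforce_strategy}, sound and transparent enforcement exists iff this winning strategy exists, completing the reduction.

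The main obstacle is in designing clause (3) so that a PCP solution is encoded as a violating pair of \emph{full} trace comparisons while still only using two universal trace quantifiers and finite-state LTL operators: since a PCP solution can have arbitrary length $k$, the trick is to push the index sequence into the input stream itself (so a single session encodes an entire candidate solution), rather than trying to concatenate outputs across sessions. A secondary subtlety is to show that transparency does not let the enforcer sidestep the reduction by deviating from the "literal" strategy early; this is handled by arranging clause (2) so that any deviation on well-formed inputs immediately produces a losing prefix that the symmetric session can exhibit, regardless of what the enforcer attempts.
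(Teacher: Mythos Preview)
Your reduction from PCP takes a different route from the paper, which instead reduces from the (already known undecidable) realizability problem for universal HyperLTL. The paper's argument is short: given a universal formula $\varphi$, conjoin it with the determinism clause $\forall\pi\ldot\forall\pi'\ldot(\bigwedge_{o\in O} o_\pi \leftrightarrow o_{\pi'})\LTLweakuntil(\bigvee_{i\in I} i_\pi \not\leftrightarrow i_{\pi'})$, which forces any winning strategy to ignore session history; sequential enforceability of the conjunction then coincides with ordinary HyperLTL realizability of $\varphi$.

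Your PCP encoding, however, has a genuine gap. Clauses (1)--(3) as you state them do not force the enforcer to emit the concatenations $u_{i_1}\cdots u_{i_m}$ or $v_{i_1}\cdots v_{i_m}$: clause~(1) asks only for symbols in $\Gamma$ and an eventual $\#$-tail, clause~(2) is determinism within a fixed (role, index sequence) pair, and clause~(3) merely requires that opposite-role sessions on the same index sequence produce \emph{different} outputs. The strategy that ignores the indices entirely and outputs $a\,\#^\omega$ on role $u$ and $b\,\#^\omega$ on role $v$ (for any fixed $a\neq b$ in $\Gamma$) satisfies all three clauses, so $\varphi_{PCP}$ is enforceable regardless of whether the PCP instance has a solution. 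In particular, your claim that ``clauses~(1) and~(2) still require emitting $v_{i_1}\cdots v_{i_k}$'' is not supported by the clauses you wrote down, and the appeal to transparency does not help: even after the system has produced the literal $u$-string in session~1, every prefix of session~2 remains winning because the enforcer can always deviate at the next output symbol without violating anything.

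To rescue a direct PCP reduction you would need an additional clause that pins the output to the literal concatenation, but then you face the usual alignment problem: the lengths $|u_j|$ and $|v_j|$ differ, so the two roles need different padding, and expressing ``same index sequence'' and ``identical output string'' simultaneously across differently padded traces is exactly what position-by-position HyperLTL comparison makes hard. The paper sidesteps all of this by reducing from a problem already formulated at the HyperLTL level.
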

\begin{proof}
	We encode the classic realizability problem of universal HyperLTL, which is undecidable~\cite{DBLP:conf/cav/FinkbeinerHLST18}, into the sequential model enforcement problem for universal HyperLTL.
	HyperLTL realizability asks if there exists a strategy $\sigma \colon \strat{I}{O}$ such that the set of traces constructed from every possible input sequence satisfies the formula $\varphi$, i.e. whether $\set{(w[0] \cup \sigma(\epsilon)) \cdot (w[1] \cup \sigma(w[0])) \cdot (w[2] \cup \sigma(w[0..1])) \cdot \ldots ~|~ w \in (\pow{I})^\omega}, \emptyset \models \varphi$.	
	Let a universal HyperLTL formula $\varphi$ over $\Sigma = I \cupdot O$ be given. 
	We construct
	$
	\psi \coloneqq ~\varphi~ \land ~
	\forall \pi \ldot \forall \pi' \ldot (\bigwedge_{o \in O} o_\pi \leftrightarrow o_{\pi'}) ~\LTLweakuntil~ (\bigvee_{i \in I} i_\pi \not\leftrightarrow i_{\pi'}) .
	$
	The universal HyperLTL formula $\psi$ requires the strategy to choose the same outputs as long as the inputs are the same. The choice of the strategy must therefore be independent of earlier sessions, i.e., $\sigma(U,\epsilon)(s_I) = \sigma(U', \epsilon)(s_I)$ for all sets of traces $U, U'$ and input sequences $s_I$.
	Any trace set that fulfills $\psi$ can therefore be arranged in a traditional HyperLTL strategy tree branching on the inputs and labeling the nodes with the outputs. 
	Assume the enforcer has to take over control after the first event when enforcing $\psi$.
	Thus, there is a sound and transparent enforcement mechanism for $\psi$ iff $\varphi$ is realizable.
\qed
\end{proof}

\paragraph{Finishing the Current Session.}
As the general problem is undecidable, we study the problem where the enforcer takes over control only for the rest of the current session.
For the next session, the existence of a solution is not guaranteed.
This approach is especially reasonable if we are confident that errors occur only sporadically.
We adapt the algorithm presented for the parallel model.
Let a HyperLTL formula $\forall \pi_1 \ldot  \ldots \forall \pi_k \ldot \varphi$ over $\Sigma = I \cupdot O$ be given, where $\varphi$ is quantifier free. 
As for the parallel model, we translate the formula into an LTL formula $\varphi^n_\text{LTL}$. 
We first do so for the first session with $n = 1$. 
We construct and solve the parity game for that formula, and use it to monitor the incoming events and to enforce the rest of the session if necessary. 
For the next session, we construct $\varphi^n_\text{LTL}$ for $n = 2$ and add an additional conjunct encoding the observed trace~$t_1$. 
The resulting formula induces a parity game that monitors and enforces the second trace. 
Like this, we can always enforce the current trace in relation to all traces seen so far. 
\begin{algorithm}[t]
	\caption{HyperLTL enforcement algorithm for the sequential trace input model.}\label{EnforceSession}
	\label{alg:sequential}
	\begin{algorithmic}[1]
		\Procedure{EnforceSequential}{$\psi$}
		\State n := 1; 
		\State $\varphi_\text{traces}$ := true; 
		\While{true}
		\State $\psi_\text{curr}$ := \Call{toLTL}{$\psi$, n} $\land ~ \varphi_\text{traces}$;
		\State (game, winR, $q_0$) := \Call{Initialize'}{$\psi_\text{curr}$};
		\State res := \Call{Monitor'}{game, winR, $q_0$};
		\If{res == (`ok', t)}
		\State $\varphi_\text{traces}$ := $\varphi_\text{traces}$ $\land$ \Call{toLTL}{t}; 
		\ElsIf{res = (`losing', t, (game, lastq))} 
		\State t' := \Call{Enforce'}{game, lastq};
		\State $\varphi_\text{traces}$ := $\varphi_\text{traces}$ $\land$ \Call{toLTL}{t $\cdot$ t'}; 
		\EndIf
		\State n++; 
		\EndWhile
		\EndProcedure
	\end{algorithmic}
\end{algorithm}
%
Algorithm~\ref{alg:sequential} depicts the algorithm calling similar procedures as in Algorithm~\ref{ParallelInfinite} (for which we therefore do not give any pseudo code). 
{\scshape Initialize'} is already given an LTL formula and, therefore, does not translate its input to LTL. 
{\scshape Monitor'} returns a tuple including the reason for its termination (`ok' when the trace finished and `losing' when a losing prefix was detected). 
Additionally, the monitor returns the trace seen so far (not including the event that led to a losing prefix), which will be added to $\varphi_\text{traces}$. 
{\scshape Enforce'} enforces the rest of the session and afterwards returns the produced trace, which is then encoded in the LTL formula ({\scshape toLTL}({t})). 

\emph{Correctness and Complexity.}
Soundness and transparency follow from the fact that for the $n$-th session, the algorithm reduces the problem to the parallel setting with $n$ traces, with the first $n-1$ traces being fixed and encoded into the LTL formula $\varphi^n_\text{LTL}$.
We construct a new parity game from $\varphi^n_\text{LTL}$ after each finished session. 
The algorithm is thus of non-elementary complexity. 

\vspace{-1ex}
\paragraph{Safety Specifications.}
If we restrict ourselves to formulas $\psi = \forall \pi_1 \ldots \forall \pi_k \ldot \varphi$, where $\varphi$ is a \emph{safety} formula, we can improve the complexity of the algorithm.
Note, however, that not every property with losing prefixes is a safety property: for the formula $\forall \pi \ldot \LTLglobally (o_\pi \rightarrow \LTLeventually i_\pi)$ with $o \in O$ and $i \in I$, any prefix with $o$ set at some point is losing. However, the formula does not belong to the safety fragment.
Given a safety formula $\varphi$, we can translate it to a safety game~\cite{DBLP:conf/cav/KupfermanV99} instead of a parity game. 
The LTL formula we create with every new trace is built incrementally, i.e., with every finished trace we only ever add new conjuncts.
With safety games, we can thus recycle the winning region from the game of the previous trace. 
The algorithm proceeds as follows. 
	1) Translate $\varphi$ into an LTL formula $\varphi^n_\text{LTL}$ for $n=1$. 
	2) Build the safety game $\game^1_{\varphi_\text{LTL}}$ for $\varphi^1_\text{LTL}$ and solve it.  
	Monitor the incoming events of trace $t_1$ as before. 
	Enforce the rest of the trace if necessary.
	3) Once the session is terminated, generate the LTL formula $\varphi^2_\text{LTL} = \varphi^1_\text{LTL} \land \varphi^2_\text{diff}$. 
	As $\varphi^2_\text{LTL}$ is a conjunction of the old formula and a new conjunct $\varphi^2_\text{diff}$, we only need to generate the safety game $\game^2_\text{diff}$ and then build the product of $\game^2_\text{diff}$ with the winning region of $\game^1_{\varphi_\text{LTL}}$. 
	We solve the resulting game and monitor (and potentially enforce) as before. 
The algorithm incrementally refines the safety game and enforces the rest of a session if needed. 
The construction recycles parts of the game computed for the previous session.
We thus avoid the costly translation to a parity game for every new session. 
While constructing the safety game from the LTL specification has still doubly exponential complexity~\cite{DBLP:conf/cav/KupfermanV99}, solving safety games can be done in linear time~\cite{DBLP:journals/tods/Beeri80}.

\section{Experimental Evaluation}
\label{sec:experiments}

We implemented the algorithm for the parallel trace input model in our prototype tool REHyper\footnote{REHyper is available at \url{https://github.com/reactive-systems/REHyper}}, which is written in Rust.
We use Strix~\cite{meyer2018strix} for the generation of the parity game. 
We determine the winning region and the positional strategies of the game with PGSolver~\cite{friedmann2009pgsolver}.
All experiments ran on an Intel Xeon CPU E3-1240~v5 $3.50$~GHz, with $8$ GB memory running Debian $10.6$.
We evaluate our prototype with two experiments. In the first, we enforce a non-trivial formulation of fairness in a mutual exclusion protocol. In the second, we enforce the information flow policy \emph{observational determinism} on randomly generated traces.

\subsection{Enforcing Symmetry in Mutual Exclusion Algorithms}
\begin{table}[t]
	\vspace{-5ex}
	\caption{Enforcing symmetry in the Bakery protocol on pairs of traces. Times are given in seconds.}
	\label{table:results2}
	\def\arraystretch{1.1}
	\setlength{\tabcolsep}{3pt}
	\centering
	\begin{tabular}{r||c|c|c|c||c|c|c|c}
		\hline \hline
		& \multicolumn{4}{c||}{random traces}                              & \multicolumn{4}{c}{symmetric traces}                            \\ \hline \hline
		\multicolumn{1}{r||}{~~$\mid$t$\mid$~~} & \multicolumn{1}{l|}{avg} & \multicolumn{1}{l|}{min} & \multicolumn{1}{l|}{max} & \multicolumn{1}{l||}{\#enforced} & \multicolumn{1}{l|}{avg} & \multicolumn{1}{l|}{min} & \multicolumn{1}{l|}{max} & \multicolumn{1}{l}{\#enforced} \\ \hline
		500                                    & 0.003                              & 0.003                                    & 0.003                              & 0   & 0.013                              & 0.008                                    & 0.020                              & 10                                \\ 
		1000                                   & 0.005                              & 0.005                                   & 0.005                              & 0   & 0.024                              & 0.015                                    & 0.039                              & 10                                \\
		5.000                                   & 0.026                              & 0.024                                   & 0.045                              & 0    & 0.078                              & 0.065                                    & 0.097                              & 10                               \\
		10.000                                   & 0.049                              & 0.047                                   & 0.064                              & 0     & 0.153                              & 0.129                                    & 0.178                              & 10                              \\
	\end{tabular}
\end{table}
Mutual exclusion algorithms like Lamport's bakery protocol ensure that multiple threads can safely access a shared resource. 
To ensure fair access to the resource, we want the protocol to be symmetric, i.e., for any two traces where the roles of the two processes are swapped, the grants are swapped accordingly.
Since symmetry requires the comparison of two traces, it is a hyperproperty.

For our experiment, we used a Verilog implementation of the Bakery protocol~\cite{DBLP:journals/cacm/Lamport74a}, which has been proven to violate the following symmetry formulation~\cite{MCHyper}:
\begin{align*}
	\forall \pi \ldot \forall \pi' \ldot & (\mathit{pc}(0)_\pi = \mathit{pc}(1)_{\pi'} \land \mathit{pc}(1)_{\pi} = \mathit{pc}(0)_{\pi'}) \LTLweakuntil \hspace{2pt} \neg \hspace{2pt} (\mathit{pause}_\pi = \mathit{pause}_{\pi'} \hspace{2pt} \land\\
	& \quad \mathtt{sym}(\mathit{sel}_\pi, \mathit{sel}_{\pi'}) \land  \mathtt{sym}(\mathit{break}_\pi, \mathit{break}_{\pi'}) \land \mathit{sel}_\pi < 3 \land  \mathit{sel}_{\pi'} < 3) \enspace .
\end{align*}
The specification states that for any two traces, the program counters need to be symmetrical in the two processes as long as the processes are scheduled ($\mathit{select}$) and ties are broken ($\mathit{break}$) symmetrically. 
Both $\mathit{pause}$ and $\mathit{sel} < 3$ handle further implementation details. 
The AIGER translation~\cite{MCHyper} of the protocol has 5 inputs and 46 outputs. 
To enforce the above formula, only 10 of the outputs are relevant.
We enforced symmetry of the bakery protocol on simulated pairs of traces produced by the protocol. 
Table~\ref{table:results2} shows our results for different trace lengths and trace generation techniques. 
We report the average runtime over 10 runs as well as minimal and maximal times along with the number of times the enforcer needed to intervene.
The symmetry assumptions are fairly specific and are unlikely to be reproduced by random input simulation. 
In a second experiment, we therefore generated pairs of symmetric traces. 
Here, the enforcer had to intervene every time, which produced only a small overhead.

The required game was constructed and solved in $313$ seconds. 
For sets of more than two traces, the construction of the parity game did not return within two hours.
The case study shows that the tool performs without significant overhead at runtime and can easily handle very long traces. 
The bottleneck is the initial parity game construction and solving.

\subsection{Enforcing Observational Determinism}
\begin{table}[t]
	\vspace{-5ex}
	\caption{Enforcing observational determinism. Times are given in seconds.}
	\label{table:results}
	\def\arraystretch{1.1}
	\centering
	\begin{tabular}{c|c|c|c||c||c|c|c|c||c|c|c|c}
		\hline \hline
		\multicolumn{4}{c||}{benchmark size}                                       &
		\multicolumn{1}{c||}{init time}                                       &
		\multicolumn{4}{c||}{0.5\% bit flip probability}                              & \multicolumn{4}{c}{1\% bit flip probability}                            \\ \hline \hline
		\multicolumn{1}{l|}{\# i}&\multicolumn{1}{l|}{\# o}&\multicolumn{1}{l|}{\# t} &  \multicolumn{1}{c||}{$\mid$t$\mid$} & & \multicolumn{1}{l|}{avg} &  \multicolumn{1}{l|}{min} & \multicolumn{1}{l|}{max} & \multicolumn{1}{l||}{\#enf'ed} & \multicolumn{1}{l|}{avg} & \multicolumn{1}{l|}{min} & \multicolumn{1}{l|}{max} & \multicolumn{1}{l}{\#enf'ed} \\ 
		\multirow{2}{*}{1} & \multirow{2}{*}{1} & \multirow{1}{*}{3}
		& 10000 & 0.517 & 0.014 & 0.013 & 0.017 & 60 & 0.013 & 0.013 & 0.015 & 60 \\
		& & \multirow{1}{*}{8} & 10000  & 65.67 & 0.524 & 0.517 & 0.625 & 99 & 0.524 & 0.517 & 0.588 & 97  \\\hline
		\multirow{2}{*}{2} & \multirow{2}{*}{2} & \multirow{1}{*}{4}
		& 10000 & 0.869 & 0.025 & 0.024 & 0.030 & 73 & 0.032 & 0.031 & 0.043 & 77 \\
		& & \multirow{1}{*}{5}
		& 10000 & 21.189 & 0.038 & 0.037 & 0.041 & 90 & 0.038 & 0.037 & 0.042 & 86 \\ \hline 
		\multirow{3}{*}{3} & \multirow{3}{*}{3} & \multirow{1}{*}{2}
		& 10000 & 0.633 & 0.019 & 0.018 & 0.022 & 47 & 0.023 & 0.023 & 0.025 & 54 \\
		& & \multirow{1}{*}{4} & 5000  & \multirow{2}{*}{132.849} & 0.022 & 0.021 & 0.026 & 77 & 0.021 & 0.021 & 0.021 & 71 \\
		& & \multirow{1}{*}{4}  & 10000 & & 0.038 & 0.036 & 0.056 & 77 & 0.037 & 0.037 & 0.042 & 77 \\ \hline 
		\multirow{3}{*}{4} & \multirow{3}{*}{4} & \multirow{1}{*}{3}
		& 1000 & \multirow{3}{*}{43.885} & 0.010 & 0.008 & 0.015 & 71 & 0.009 & 0.008 & 0.018 & 68 \\
		& & \multirow{1}{*}{3}  & 5000 & & 0.023 & 0.021 & 0.033 & 72 & 0.022 & 0.021 & 0.025 & 78 \\
		& & \multirow{1}{*}{3}  & 10000 & & 0.038 & 0.037 & 0.050 & 76 & 0.038 & 0.037 & 0.041 & 75 \\ 
	\end{tabular}
\end{table}
In our second experiment, we enforced observational determinism, given as the HyperLTL formula 
$\forall \pi \ldot \forall \pi' \ldot (\mathit{o}_\pi \leftrightarrow \mathit{o}_{\pi'}) \LTLweakuntil (\mathit{i}_\pi \nleftrightarrow \mathit{i}_{\pi'})$.
The formula states that for any two execution traces, the observable outputs have to agree as long as the observable inputs agree.
Observational determinism is a prototypical information-flow policy used in many experiments and case studies for HyperLTL (e.g.~\cite{DBLP:conf/cav/FinkbeinerHLST18,MCHyper,DBLP:conf/csfw/BonakdarpourF18}). 
We generated traces using the following scalable generation scheme: 
At each position, each input and output bit is flipped with a certain probability ($0.5\%$ or $1\%$). 
This results in instances where observational determinism randomly breaks.
Table~\ref{table:results} shows our results.
Each line corresponds to 100 randomly generated instances of the given size (number of inputs/outputs and traces, and length of the sessions).
We report the initialization time that is needed to generate and solve the parity game. 
Furthermore, we report average, minimal, and maximal enforcement time as well as the number of instances where the enforcer intervened. 
All times are reported in seconds. 
The bottleneck is again the time needed to construct and solve the parity game. 
At runtime, which is the crucial aspect, the enforcer performs efficiently. 
The higher bit flip probability did not lead to more enforcements: 
For traces of length $10000$, the probability that the enforcer intervenes is relatively high already at a bit flip probability of $0.5\%$.
%

\section{Conclusion}
\label{sec:conclusion}
We studied the runtime enforcement problem for hyperproperties.
Depending on the trace input model, we showed that the enforcement problem boils down to detecting losing prefixes and solving a custom synthesis problem.
For both input models, we provided enforcement algorithms for specifications given in the universally quantified fragment of the temporal hyperlogic HyperLTL.
While the problem for the sequential trace input model is in general undecidable, we showed that enforcing HyperLTL specifications becomes decidable under the reasonable restriction to only finish the current session.
For the parallel model, we provided an enforcement mechanism based on parity game solving.
Our prototype tool implements this algorithm for the parallel model.
We conducted experiments on two case studies enforcing complex HyperLTL specifications for reactive systems with the parallel model. Our results show that once the initial parity game solving succeeds, our approach has only little overhead at runtime and scales to long traces.

%
%
 \bibliographystyle{splncs04}
 \bibliography{main}

\end{document}